\newtheorem{lemma}{Lemma}
\begin{document}
%


\title{Insensitivity of User Distribution in \\Multicell Networks under General Mobility \\and Session Patterns}

\author{Wei Bao,~\IEEEmembership{Student Member,~IEEE} and
        Ben Liang,~\IEEEmembership{Senior Member,~IEEE}
\thanks{A preliminary version of this work has appeared as \cite{infocom_version}.} }

\maketitle


\begin{abstract}
The location of active users is an important factor in the performance analysis of mobile multicell networks, but it is difficult to quantify due to the wide variety of user mobility and session patterns.  
In this work, we study the stationary distribution of users by modeling the system as a multi-route queueing network with Poisson inputs.  We consider arbitrary routing and arbitrary joint probability distributions for the channel holding times in each route.  Through a decomposition-composition approach, we derive a closed-form expression for the joint stationary distribution for the number of users in all cells. The stationary user distribution (1) is insensitive to the user movement patterns, (2) is insensitive to general and dependently distributed channel holding times, (3) depends only on the average arrival rate and average channel holding time at each cell, and (4) is completely characterized by an open network with $M/M/\infty$ queues.  We use the Dartmouth trace to validate our analysis, which shows that the analytical model is accurate when new session arrivals are Poisson and remains useful when non-Poisson session arrivals are also included in the data set. Our results suggest that accurate calculation of the user distribution, and other associated metrics such as the system workload, can be achieved with much lower complexity than previously expected.

\end{abstract}

\begin{IEEEkeywords}
Mobility modeling, multicell network, user distribution, insensitivity, dependent channel holding times.
\end{IEEEkeywords}

\section {Introduction} \label{section_introduction}

\IEEEPARstart{I}n designing ever more efficient and capable mobile access networks, the accurate modeling of how user mobility and session connectivity patterns affect network performance is of paramount interest.  However, compared with wired networks, the analytical modeling of mobile networks is burdened with many additional technical challenges.  Some of the most difficult factors are the following:
\begin{itemize}
\item The movement of users may be individually arbitrary, without following any common mobility pattern \cite{Mobility-Survey}.
\item The session durations may have a general probability distribution, supporting diverse data and multimedia applications \cite{Mobility-Parameter}.
\item The channel holding times at different cells are correlated, dependent on the speed or trajectory of different users \cite{zahran:TMC08}.
\end{itemize}

To facilitate tractable analysis, existing studies often adopt simplified models.
For example, classical models assume exponentially distributed cell dwell times and session durations, resulting in independent and memoryless channel holding times  \cite{Guerin1987,Hong1986,Lin1997}. Although more general mobility and session models have been considered in the past literature \cite{FanChLi97,Mobility-Parameter,Wong2000,Fang2005,Marsan2004,zahran:TMC08}, to the best of our knowledge, none addresses all of the challenges above.

In this paper, we study the joint stationary distribution for the number of users in all cells in a multicell network, which has important utilization in network management and planning.  Prior studies have proposed several analytical models to estimate the user distribution with various degrees of detail and generality \cite{Mobility-RWP-Analytical,Mobility-Core2,Mobility-Core5,Mobility-Core4,Mobility-Core1}.  Instead, we consider general mobility and session patterns, only requiring that the new session arrivals form a Poisson process, which is well supported by experimental data \cite{Iversen:ITU01,Heegaard:LNCS07,Mobility-Core1}.  We model the user mobility with a general system with multiple routes, each representing one type of users with a specific movement pattern.  A general probability distribution is used to represent the session durations. As a consequence, the channel holding times at different cell sites are no longer independent.

Through a decomposition-composition approach, we derive a closed-form expression for the joint stationary distribution for the number of users in all cells.  We observe five important conclusions on the stationary user distribution: \emph{first}, it is insensitive to how the users move through the system; \emph{second}, it is insensitive to the general distribution of channel holding times; \emph{third}, it is insensitive to the correlation among the channel holding times; \emph{fourth}, it depends only on the average arrival rate and average channel holding time at each cell; and \emph{fifth}, it is perfectly captured by an open Jackson network with $M/M/\infty$ queues.

We confirm our theoretical analysis through experimental validation using the Dartmouth user mobility traces \cite{Trace-Dartmouth-Data03}.  These traces provide a large data set, with 152 APs and more than 5000 users, to support an accurate real-life measurement of the joint user distribution.  They also contain a large amount of handoff traffic among APs to create strong dependency between channel holding times, which tests our claim of insensitivity.  The experimental results show that the proposed analysis accurately predicts real-world user distributions.

The conclusion of this work has important consequence to performance analysis and practical system design.  It suggests that accurate calculation of the user distribution, and other associated metrics such as the system workload, can be achieved with much lower requirement for system parameter estimation than previously expected.  Furthermore, the simplicity of the resultant product-form user distribution enables further analytical endeavors in system optimization.


The rest of the paper is organized as follows. In Section \ref{section_relatedwork}, we discuss the relation between our work and  prior works. In Section \ref{section_assumption}, we present the system model. In Sections \ref{single} and \ref{multiple}, we derive the analytical stationary distributions for single-route and multiple-route networks, respectively. In Section \ref{section_experiment}, we validate our analysis with experimental results from the Dartmouth traces.  
Finally, concluding remarks are given in Section \ref{section_conclusion}.

\section {Related Work} \label{section_relatedwork}

In the following subsections, we present the related prior work in user distribution modeling and general results in the insensitivity of queueing networks.

\subsection{User Distribution in Mobile Networks}

The user distribution is an important factor in the management and planning of mobile networks.  However, relatively few analytical models are available in the literature.  The uniform user distribution has been widely adopted for mathematical convenience, but it does not account for non-homogeneity in the physical topology and is incorrect in some cases.  For example, it is well known that the user distribution is non-uniform under the random waypoint model \cite{Mobility-RWP-Analytical}.

Other previous works have proposed analytical models using stochastic queueing networks to derive the user distribution in different environments, including wireless multimedia networks \cite{Mobility-Core2}, vehicular ad-hoc networks \cite{Mobility-Core5}, and WLANs  \cite{Mobility-Core4,Mobility-Core1}.  However, they do not allow arbitrary mobility or arbitrary session patterns.  In terms of user movement, \cite{Mobility-Core2}, \cite{Mobility-Core5}, and \cite{Mobility-Core1} assume that users move from one cell to another probabilistically and memorylessly, while \cite{Mobility-Core4} focuses on scattered single cells, so that user movement among multiple cells is not discussed.  None of them consider the arbitrary user movement patterns. In terms of channel holding times, \cite{Mobility-Core2} uses the sum of hyper-exponentials or the Coxian distribution to approximate arbitrary distributions; \cite{Mobility-Core4} assumes generally distributed channel holding times but concerns only a single cell; and \cite{Mobility-Core5} and \cite{Mobility-Core1} consider generally but independently distributed channel holding times in different cells. None of the above works consider the dependence among channel holding times.

 Note that the authors of \cite{Mobility-Core1} have also observed a surprising match between analysis and real-life user mobility traces from the Dartmouth study \cite{Trace-Dartmouth-Data03}, even though their analysis assumes simple $M/G/\infty$ mobility and session models without considering arbitrary user movement patterns or dependent channel holding times.
 No analytical explanation is given in \cite{Mobility-Core1} for this observation.  In contrast, our work provides theoretical support for it, since we show that the stationary user distribution is also insensitive to arbitrary user movement patterns and dependent channel holding times.

\subsection{Insensitivity Property}

The insensitivity of queueing networks indicates the situation where the stationary distribution remains unchanged while the distribution of service times takes arbitrary forms.  When the service times are assumed independent among different queues, there are several well known conditions for insensitivity.  For example, networks with symmetric queues are insensitive \cite{Book-Reversibility}.  In \cite{Insensitivity1} and \cite{Insensitivity2}, the partial balance of probability flows is shown to be a sufficient condition for insensitivity.  In \cite{Insensitivity4}, partial reversibility is shown to be a necessary and sufficient condition for insensitivity.  However, none of these known results consider the case where the service times between different queues are dependent.  For example, the queueing network closely related to ours is one with $M/G/\infty$ queues.   It is known to be insensitive when the service times are independent \cite{Book-Reversibility}, but to the best of our knowledge, there is no further general result for dependent service times.

\subsection{Preliminary Version}
A preliminary version of this work has appeared as \cite{infocom_version}. This full version includes the following extensions: First, we provide more detailed derivation and discussion in the analysis of the single-route network in Sections \ref{single}.  Second, we fully expand the analysis of the multiple-route network by proving the theorem of insensitivity and deriving the stationary user distribution of multicell networks in Section \ref{multiple}. Third, we include new experimental studies in Section \ref{section_experiment}.

\section {System Model}\label{section_assumption}

Consider a cellular network with $C$ cells.  There are $L$ unique \emph{routes}, each defined as a finite ordered sequence of cells.  The $j$th stage on the $l$th route corresponds to the $j$th cell in the sequence, which is denoted as $c(l,j)$.  Let $N_l$ be the number of stages on the $l$th route.  Each user of the $l$th route
starts a new session in cell $c(l,1)$; then it moves along the route through cells $c(l,1), c(l,2)\ldots c(l,N_l)$, as long as the session remains active.  The user is considered to have departed the network when its session terminates or when it exits cell $c(l,N_l)$.  We allow an arbitrary number of arbitrary routes to cover all possible movement patterns.

For each route, we assume the arrivals of \emph{new} sessions to form a Poisson process.  Note that although the arrivals of packets in the Internet may not form Poisson processes \cite{Poisson-against1}, 
the arrivals of new sessions are at a much larger time scale and are well justified as Poisson \cite{Iversen:ITU01,Heegaard:LNCS07}.
Furthermore, in \cite{Mobility-Core1} and later in Section \ref{section_experiment}, experimental data show that new sessions in the type of mobile networks under consideration are indeed Poisson barring some extreme cases.  We emphasize that only the new session arrivals are Poisson, while the handoff arrivals at each cell have general statistics with complicated dependencies.

The session duration of a user on the $l$th route is modeled as an arbitrarily distributed random variable $T_{l}$.  Let $\lambda_{l0}$ be the new session arrival rate at the $l$th route.
After a new session arrival, let $\tau_{l1}$ denote the residual cell dwell time of the user in the $1$st stage on the $l$th route, which is arbitrarily distributed.  Let $\tau_{lj}$, $2 \leq j \leq N_l$, denote the cell dwell time of the user in the $j$th stage on the $l$th route, which are also arbitrarily distributed.  Then, the channel holding time of the $j$th stage on the $l$th route, $t_{lj}$, if it exists, can be represented as follows:
\begin{equation}
t_{lj}=
\begin{cases}
\min\{T_l, \tau_{l1}\}, & \textrm{ if } j=1,\\
\displaystyle
\min\{T_l-\sum_{i=1}^{j-1}\tau_{li}, \tau_{lj}\},& \textrm{ if } T_l>\sum_{i=1}^{j-1}\tau_{li}, 2\leq j\leq N_l.
\end{cases}
\end{equation}

Fig. \ref{systemmodel} shows an example network with $3$ routes.  Route $1$ starts from cell $1$ and passes cell $3$, $4$ and $6$ (i.e., $c(1, 1) = 1$, $c(1, 2) = 3$, $c(1, 3) = 4$ and $c(1, 4) = 6$). A user starts a session in cell $1$, and the session is terminated in cell $4$.  The corresponding $T_1, \tau_{11}, \tau_{12}, \tau_{13}, t_{11}, t_{12}$, and $t_{13}$ are labeled in the figure.

\begin{figure}[tbp]
\begin{center}
\scalebox{0.57}{ \input{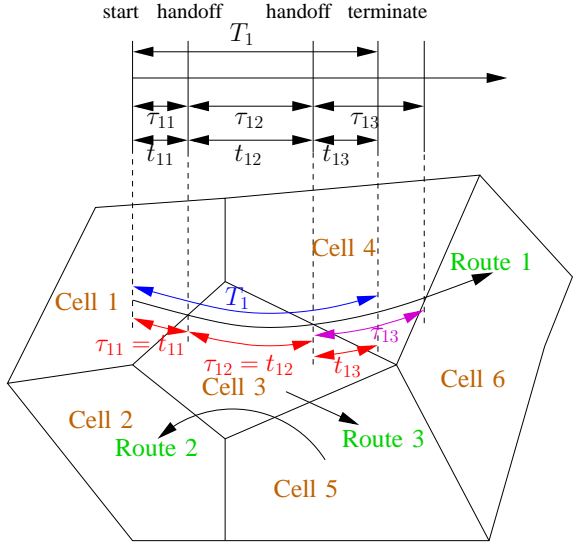} }
\caption{System model.}
\label{systemmodel}
\end{center}
\end{figure}

\begin{table}[ht]
\hspace{-0.4cm}
\renewcommand{\arraystretch}{1.3}
\caption{Selected Definition of Variables}
\label{table}
\small
\begin{tabular}{|c| c |   }
\hline
\bfseries Name  & \bfseries Definition \\
\hline
$x_{lj}$,                 & Number of sessions in the $j$th stage on the $l$th route,            \\
$\mathbf{x}$                 & $\mathbf{x}=[\{x_{lj}\}]^T$.            \\
\hline
$\widetilde{x}_{lkij}$        & On the $l$th route, number of sessions lasting $k$ stages, \\
$\widetilde{\mathbf{x}}$      &   in the $i$th  realization, in the $j$th stage,   $\widetilde{\mathbf{x}}=[\{\widetilde{x}_{lkij}\}]^T$. \\

\hline
$y_{n}$, $\mathbf{y}$       & The number of sessions in the $n$th cell,    $\mathbf{y}=[\{y_n\}]^T.$          \\
\hline
$t_{lj}$        & Random variable: on the $l$th route,  \\
                & channel holding time at the $j$th stage.  \\
\hline
$\widehat{t}_{lkj}$        & Random variable: on the $l$th route, channel holding  \\
                            &time at the $j$th stage  given that there are  $k$ stages.  \\
\hline
$\overline{t}_{lj}$        & On the $l$th route, the average value of $t_{lj}$, \\
                           &  given that the number of stages $\geq j$.  \\
\hline
$\mathbf{\widehat{t}_{lk}}$        & Random vector: $\{\widehat{t}_{lk1},\ldots,\widehat{t}_{lkk} \}$.\\
\hline
$\widetilde{t}_{lkij}$        & Constant: on the $l$th route, $i$th realization of channel \\
                              &  holding time at $j$th stage, when session lasts $k$ stages.\\
\hline
$\mathbf{\widetilde{t}_{lki}}$        & Constant vector:  $\{\widetilde{t}_{lki1},\ldots,\widetilde{t}_{lkik} \}$.\\
\hline

$p_{lk}$        &  On the $l$th route, probability \\
                &  that a session lasts $k$ stages.\\
\hline
$q_{lki}$        & On the $l$th route, probability of the $i$th realization of\\
                 & a session, given that a session lasts for $k$ stages.\\
\hline
$P_{lki}$        & On the $l$th route, probability that a session lasts $k$ \\
                 &  stages and is in the $i$th realization, $P_{lki}=p_{lk}q_{lki}$.\\
\hline
$\lambda_{l0}$        & Arrival  rate of the $l$th route.   \\
\hline
$\lambda_{lj}$        & $1/\overline{t}_{lj}$.   \\
\hline
$\widetilde{\lambda}_{lki0}$        &  Arrival  rate of sessions lasting $k$ stages, in the $i$th   \\
                                    &  realization, on the $l$th route, $\widetilde{\lambda}_{lki0}=P_{lki}\lambda_{l0}$. \\

\hline
$\widetilde{\lambda}_{lkij}$        &  $1/\widetilde{t}_{lkij}$.\\
\hline
$w'_{lj}$        &  Invariant measure of memoryless network. \\
\hline
$w_{lj}$        &  $w'_{lj}/\lambda_{lj}$. \\
\hline
$\widetilde{w}'_{lkij}$        &  Invariant measure of  decoupled  memoryless network. \\
\hline
$\widetilde{w}_{lkij}$        &  $\widetilde{w}'_{lkij}/\widetilde{\lambda}_{lkij}$. \\
\hline
$\pi_{0}(\mathbf{x})$        & Stationary distribution of memoryless network w.r.t $\mathbf{x}$. \\
\hline
$\pi_{D}(\widetilde{\mathbf{x}})$        & Stationary user distribution of decoupled network w.r.t $\mathbf{\widetilde{\mathbf{x}}}$.\\
\hline
$\pi(\mathbf{x})$        & Stationary user distribution of the original network w.r.t $\mathbf{x}$. \\
\hline
$\pi_1(\mathbf{y})$                 & Stationary user distribution of multicell network  w.r.t $\mathbf{y}$.     \\
\hline
$\overline{\lambda}_n$                 & Average arrival rate of the $n$th cell.    \\
\hline
$\overline{\mathfrak{t}}_n$                 & Average channel holding time of the $n$th cell.   \\
\hline
\end{tabular}
\end{table}

Since there is a one-to-one mapping between active users and sessions, we do not distinguish the two.
Note that given a route $l$, we know the sessions start at cell $c(l, 1)$ but the cell where they end is random.  Furthermore, we do not assume independence between $T_l$ and $\tau_{lj}$, and the channel holding times $t_{lj}$ are not independent either.
Finally, each route defines the user movement trace and the distribution of channel holding times, which implicitly characterizes the speed of users on this route.

Let $x_{lj}$, $1\leq l \leq L$, $1\leq j \leq N_l$, denote the number of active users in the $j$th stage on the $l$th route; let $y_n$, $1\leq n\leq C$, denote the number of active users in the $n$th cell. Let $\mathbf{x}=[\{x_{lj}:1\leq l \leq L, 1\leq j\leq N_l\}]^T$ and $\mathbf{y}=[y_1,y_2,\ldots,y_C]^T$. We aim to derive $\pi(\mathbf{x})$ and $\pi_1(\mathbf{y})$, the joint stationary user distributions for $\mathbf{x}$ and $\mathbf{y}$, respectively. Note that since $\pi(\mathbf{x})$ and $\pi_1(\mathbf{y})$ are defined in the steady state, we explicitly ignore any temporal fluctuation in these distributions.

A partial list of nomenclature is given in Table \ref{table}.

\section{Stationary User Distribution in Single-Route Network}  \label{single}

We first derive the stationary user distribution on a single route.  We construct a reference single-route memoryless network, where all the channel holding times are independently and exponentially distributed.  We prove insensitivity by showing an equivalence between the original network and the memoryless network in terms of stationary user distribution.

\subsection{Queueing Network Model for Single-Route Network}

\begin{figure}[ht]
\begin{center}
\scalebox{0.45}{ \input{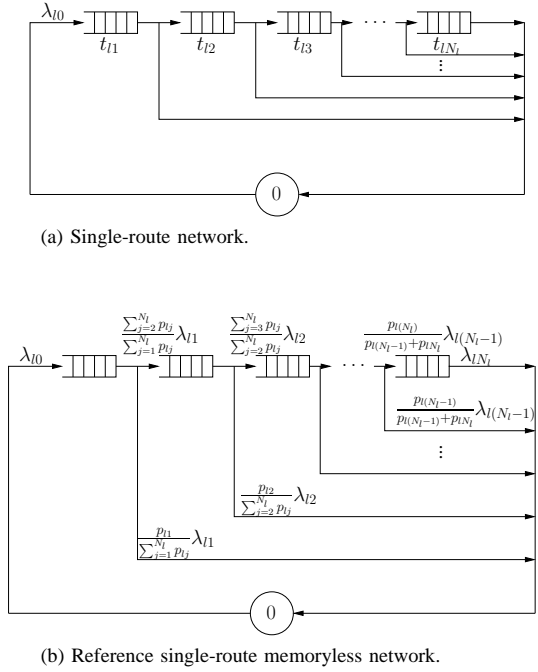} }
\caption{Single-route network.}
\label{queue11}
\end{center}
\end{figure}

Consider exclusively the $l$th route in the network.  Throughout this section, we will carry the route index $l$ in most symbols, since they will be re-used in the analysis of multiple-route networks.

As shown in Fig.~\ref{queue11}(a), we model the route as a tandem-liked queueing network, except with early exists.  The node labeled with $0$ represents the exogenous world.  The $j$th queue, $1\leq j\leq N_l$, represents the $j$th stage of the route, and units in this queue represent sessions in the $j$th stage. Each queue has infinite servers, since the sessions are served in parallel with no waiting\footnote{Users move into and out of each cell in parallel.  Therefore, when considering the channel holding time as the service time of a queue that models mobility, this is equivalent to all users being served at the same time by its own dedicated server, which is the same as having infinite servers.  In terms of active user sessions, this model is accurate for communication systems with no admission control (e.g., WiFi) and gives reasonable approximation to systems with many available channels.}.

 The channel holding time of a session in the $j$th stage, $t_{lj}$, is equivalent to the service time of the $j$th queue.  The handoff of a session from the $j$th stage to the $(j+1)$th stage is equivalent to a unit movement from the $j$th queue to the $(j+1)$th queue. The termination of a session is equivalent to the movement from a queue to node $0$.

Let $p_{lk}$ denote the probability that a session lasts for $k$  stages. It is given by
\[
p_{lk}=P\Big[\sum_{j=1}^{k-1} \tau_{lj}< T_l\leq \sum_{j=1}^{k}\tau_{lj}\Big], \text{ for } 2\leq k \leq N_l-1 ,
\]
with $p_{l1}=P\left[ T_l\leq \tau_{l1}\right]$ and $p_{lN_l}=P\left[\sum_{j=1}^{N_l-1} \tau_{lj}< T_l\right]$.
Note that we have $\sum_{k=1}^{N_l}p_{lk}=1$. Given a session in the $k$th stage, it enters the $(k+1)$th stage with probability $\frac{\sum_{j=k+1}^{N_l}p_{lj}}{\sum_{j=k}^{N_l}p_{lj}}$ and terminates with probability $\frac{p_{lk}}{\sum_{j=k}^{N_l}p_{lj}}$.

\subsection{Reference Single-Route Memoryless Network}

We define a reference \emph{single-route memoryless network}, as a Jackson network with the same topology as the original single-route network, where each queue has infinitely many independent and exponential servers.  An illustration is shown in Fig.~\ref{queue11}(b).
By matching the mean service times in this memoryless network with those of the original network, we see that its external arrival rate is $\lambda_{l0}$, the service rate of the $j$th queue is  $\lambda_{lj}=\frac{1}{\overline{t}_{lj}}$. The routing probability from the $k$th queue to the $(k+1)$th queue is the probability that a session enters the $(k+1)$th stage conditioned on it is in the $k$th stage, $\frac{\sum_{j=k+1}^{N_l}p_{lj}}{\sum_{j=k}^{N_l}p_{lj}}$. The routing probability from the $k$th queue to node $0$ is $\frac{p_{lk}}{\sum_{j=k}^{N_l}p_{lj}}$.
Thus, the service rate from the $k$th queue to the $(k+1)$th queue is  $\frac{\sum_{j=k+1}^{N_l}p_{lj}}{\sum_{j=k}^{N_l}p_{lj}}\lambda_{lk}$, and the service rate from the $k$th queue to node $0$ is  $\frac{p_{lk}}{\sum_{j=k}^{N_l}p_{lj}}\lambda_{lk}$.

Let $w_{lj}'$ denote the positive invariant measure of the $j$th queue that satisfies the routing balance equations of the single-route memoryless network. $w_0'$ is the positive invariant measure of the node $0$. We adopt the convention that   $w_0'=1$.  It can be derived from the topology of Fig.~\ref{queue11}(b) that
\begin{align}
w_{0}'=&\lambda_{l0}w_{l1}',\\
\frac{\sum_{n=j}^{N_l}p_{ln}}{\sum_{n=j-1}^{N_l}p_{ln}}w_{lj-1}'=&w_{lj}', \quad 2 \leq j \leq N_l,
\end{align}
which leads to
\begin{align}
w_{l1}'=&\lambda_{l0},\\
w_{lj}'=&\lambda_{l0} (1- \sum_{n=1}^{j-1} p_{ln}), \quad 2 \leq j \leq N_l.
\end{align}
Because each queue has infinite servers, the departure intensity at the $j$th queue is $\lambda_{lj}x_{lj}$ when there are $x_{lj}$ users in it. Let $w_{lj} = \frac{w_{lj}'}{\lambda_{lj}}$.
Then the stationary user distribution w.r.t. $\mathbf{x}$ of this network is \cite{Book-StochasticNetwork}
\begin{equation}
\pi_0(\mathbf{x})=\prod_{j=1}^{N_l} e^{-w_{lj}} w_{lj}^{x_{lj}}\frac{1}{x_{lj}!}.
\end{equation}

\subsection{Insensitivity of Single-Route Network}

\begin{figure}[t]
\begin{center}
\scalebox{0.45}{ \input{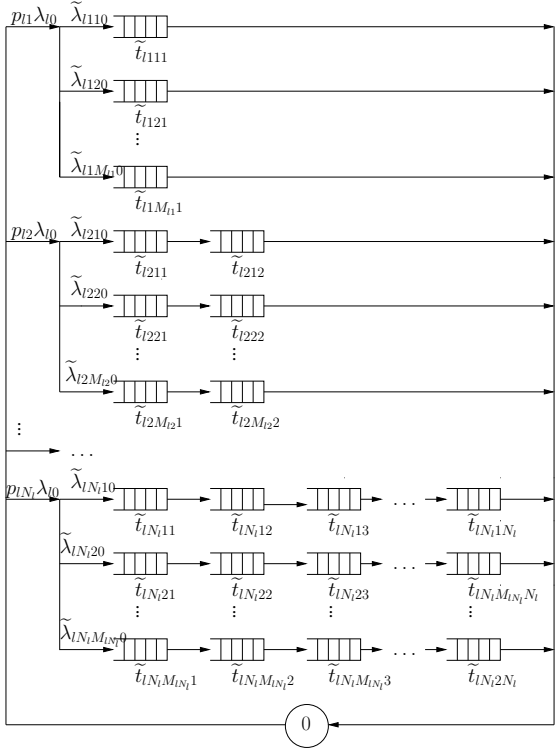} }
\caption{Decoupled network.}
\label{queue13}
\end{center}
\end{figure}

\begin{figure}[t]
\begin{center}
\scalebox{0.45}{ \input{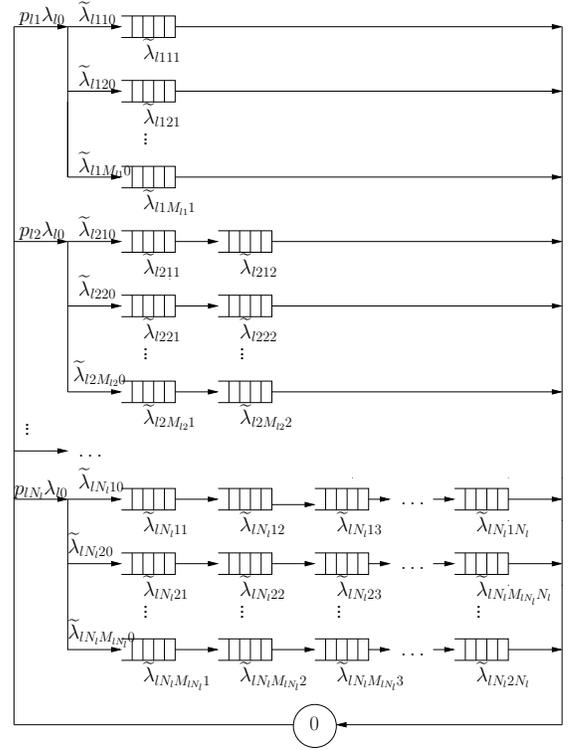} }
\caption{Reference memoryless decoupled network.}
\label{queue14}
\end{center}
\end{figure}

For the original single route network, we employ a decomposition-composition approach to derive its stationary user distribution.

Given that one session lasts for $k$ stages, we denote the channel holding times as a $k$-dimensional random vector $\mathbf{\widehat{t}_{lk}}=\{\widehat{t}_{lk1}, \ldots \widehat{t}_{lkj}, \ldots ,\widehat{t}_{lkk}\}$, where $\widehat{t}_{lkj}$ is the channel holding time at the $j$th stage.  We assume that $\mathbf{\widehat{t}_{lk}}$ is an arbitrarily distributed discrete random vector
with $M_{lk}$ possible realizations\footnote{For a vector of continuous channel holding times, we can use a sequence of discrete distributions with decreasing granularity to approach its distribution.}.  For any $i$, $1\leq i\leq M_{lk}$, we define a $k$-dimensional deterministic vector $\mathbf{\widetilde{t}_{lki}}=[\widetilde{t}_{lki1},\ldots, \widetilde{t}_{lkij},\ldots, \widetilde{t}_{lkik}]^T$ corresponding to the $i$th realization of $\mathbf{\widehat{t}_{lk}}$. Let $q_{lki}$ be the probability of the $i$th realization given that the session lasts for $k$ stages.
Also, let $P_{lki}=p_{lk}q_{lki}$ denote the probability that a session lasts for $k$ stages and it is in the $i$th realization.

By doing so, we decompose the original network into a multiple-branch queueing network as shown in Fig. \ref{queue13}, which is referred to as the \emph{decoupled network}. In this network, there are $N_l$ main branches, where the $k$th main branch represents the event that a session lasts for $k$ stages.  The $k$th main branch contains $M_{lk}$ sub-branches, where the $i$th sub-branch  represents the realization where $\mathbf{\widehat{t}_{lk}}=\mathbf{\widetilde{t}_{lki}}$.  Furthermore, the $j$th queue in the $i$th sub-branch of the $k$th main branch represents the $j$th stage of the $i$th realization of the sessions that last for $k$ stages.

Hence, each queue of the decoupled network has infinite servers with \emph{deterministic} service time, $\widetilde{t}_{lkij}$, for the $j$th stage of the $i$th sub-branch of the $k$th main branch.  Furthermore, the arrival rate of the $i$th sub-branch of the $k$th main branch is $\widetilde{\lambda}_{lki0}= P_{lki} \lambda_{l0}$.
Let $\widetilde{\mathbf{x}}=[\{\widetilde{x}_{lkij}: 1\leq k\leq N_l, 1\leq j\leq k, 1\leq i\leq M_{lk}\}]^T$ be the vector of number of sessions in the $j$th stage of the $i$th sub-branch of the $k$th main branch.  Denote by $\pi_D(\widetilde{\mathbf{x}})$ the stationary user distribution of the decoupled network.

Note that the stationary distribution of a Jackson network with infinite servers at each queue is insensitive with respect to the distribution of the service times \cite{Insensitivity1}.  Therefore, $\pi_D(\widetilde{\mathbf{x}})$ remains unchanged if we create a reference Jackson network by replacing each queue with \emph{deterministic service time} in the decoupled network with a queue that has \emph{exponential distributed memoryless service time} with the same mean (e.g., the service rate at the $j$th queue of the $i$th sub-branch of the $k$th main branch $\widetilde{\lambda}_{lkij}=\frac{1}{\widetilde{t}_{lkij}}$), as shown in Fig. \ref{queue14}, which is referred to as the \emph{reference memoryless decoupled network}.

Let $\widetilde{w}_{lkij}'$ be the positive invariant measure of the $j$th queue of the $i$th sub-branch of the $k$th main branch of the reference memoryless decoupled network, which satisfies the routing balance equations with the convention that at node $0$, $w_0'=1$.  Since each sub-branch is a chain network, we have
\begin{eqnarray}\label{fomula_wkij_0}
\widetilde{w}_{lkij}'=P_{lki}\lambda_{l0}.
\end{eqnarray}
Let $\widetilde{w}_{lkij} = \frac{\widetilde{w}_{lkij}'}{\widetilde{\lambda}_{lkij}}$.  Then the stationary user distribution of the decoupled network as well as the reference memoryless decoupled network is
\begin{align}
\pi_D(\widetilde{\mathbf{x}})=\prod_{j=1}^{N_l}\prod_{k=j}^{N_l}\prod_{i=1}^{M_{lk}}e^{-\widetilde{w}_{lkij}}\widetilde{w}_{lkij}^{\widetilde{x}_{lkij}}\frac{1}{\widetilde{x}_{lkij}!}.
\end{align}

The stationary user distribution of the original single route network, $\pi(\mathbf{x})$, is the sum of $\pi_D(\widetilde{\mathbf{x}})$ satisfying $x_{lj}=\sum_{k=j}^{N_l}\sum_{i=1}^{M_{lk}}\widetilde{x}_{lkij}$, $\forall j$.
 To derive $\pi(\mathbf{x})$, we first introduce the following lemma.
\begin{lemma} \label{thm:JacksonComposition}
Consider a stationary open Jackson network with $N$ queues each with an infinite number of servers.
Let $x_j$ be the number of units in the $j$th queue and $\mathbf{x}=[x_1,\ldots x_N]^T$.
Suppose $\{\mathcal{J}_1, \mathcal{J}_2, \ldots \mathcal{J}_M\}$ is a set of mutually exclusive subsets of $\{1,2, \ldots, N\}$.  Let $z_i=\sum_{j\in \mathcal{J}_i}x_j$, $i=1,2,\ldots, M$, denoting the sum of units in the queues inside $\mathcal{J}_i$.  Then, the distribution of $\mathbf{z}=[z_1,\ldots z_M]^T$ is
\begin{align}
\pi(\mathbf{z})=\prod_{i=1}^{M}e^{-v_i}v_i^{z_i}\frac{1}{z_i!},
\end{align}
where $v_i=\sum_{j\in \mathcal{J}_i}w_j$, and $w_j$ is the expected number of units in the $j$th queue.
\end{lemma}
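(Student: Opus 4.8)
The plan is to exploit the product-form stationary distribution of an infinite-server open Jackson network, under which the queue occupancies are mutually independent Poisson random variables, and then invoke the closure of the Poisson family under independent summation. First I would recall, exactly as in the derivation of $\pi_0(\mathbf{x})$ above, that the stationary distribution of the network is
\[
\pi(\mathbf{x})=\prod_{j=1}^{N}e^{-w_j}w_j^{x_j}\frac{1}{x_j!},
\]
so that the $x_j$ are mutually independent, with each $x_j$ being Poisson with mean $w_j$ equal to the expected number of units in the $j$th queue. This is the only structural fact about Jackson networks that the proof requires.

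Next, since the index sets $\mathcal{J}_1,\ldots,\mathcal{J}_M$ are mutually exclusive, each aggregate $z_i=\sum_{j\in\mathcal{J}_i}x_j$ is determined by a collection of the $x_j$ disjoint from the collection determining any other $z_{i'}$; because the underlying $x_j$ are independent, it follows immediately that $z_1,\ldots,z_M$ are mutually independent. I would make this precise, together with the marginal laws, by factoring the joint probability generating function over the disjoint index sets. For a single aggregate, the additivity of the Poisson distribution gives
\[
E\bigl[s^{z_i}\bigr]=\prod_{j\in\mathcal{J}_i}e^{w_j(s-1)}=e^{v_i(s-1)},
\]
with $v_i=\sum_{j\in\mathcal{J}_i}w_j$, which is the generating function of a Poisson$(v_i)$ variable. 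Combining these marginals with the independence established above yields the stated product $\prod_{i=1}^{M}e^{-v_i}v_i^{z_i}/z_i!$.

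The computation is entirely routine; the one point worth a remark is that the subsets need not cover all of $\{1,\ldots,N\}$. Any queue outside $\bigcup_i\mathcal{J}_i$ is simply marginalized out, and — again by independence of the $x_j$ — this leaves the joint law of $(z_1,\ldots,z_M)$ unchanged. I therefore do not anticipate a genuine obstacle beyond correctly leveraging the product-form result: the content of the lemma is essentially the superposition property of an independent Poisson field restricted to disjoint cells.
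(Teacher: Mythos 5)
Your proposal is correct and matches the paper's own proof: both rely on the fact that an infinite-server open Jackson network has independent Poisson queue lengths with means $w_j$, so that disjoint sums are independent Poisson with means $v_i=\sum_{j\in\mathcal{J}_i}w_j$. The generating-function computation you add is just a more explicit rendering of the same argument.
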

\begin{proof}
For a Jackson network with infinite servers at each queue, the stationary queue lengths are independent Poisson random variables with mean $w_j$ for the $j$th queue.  Hence, $z_i$ is Poisson with mean $v_i=\sum_{j\in \mathcal{J}_i}w_j$ for all $i$.  Furthermore, since $\{\mathcal{J}_i\}$ are mutually exclusive, $\{z_i\}$ are independent.
\end{proof}

Next, we note that the expected service time spent in the $j$th stage given that the $j$th stage exists, i.e., $j\leq k$ for the $k$th main branch, can be computed as
\begin{align}\label{fomula_t_j}
\nonumber\overline{t}_{lj}&=\frac{\sum_{k=j}^{N_l}\sum_{i=1}^{M_{lk}}P_{lki}\widetilde{t}_{lkij}}{\sum_{k=j}^{N_l}\sum_{i=1}^{M_{lk}}P_{lki}}\\
&=\frac{\sum_{k=j}^{N_l}\sum_{i=1}^{M_{lk}}P_{lki}\widetilde{t}_{lkij}}{1- \sum_{n=1}^{j-1} p_{ln}}.
\end{align}
Combining this with \eqref{fomula_wkij_0}, we have
\begin{align}\label{fomula_wkij}
\nonumber\sum_{k=j}^{N_l}\sum_{i=1}^{M_{lk}}\widetilde{w}_{lkij}&=\sum_{k=j}^{N_l}\sum_{i=1}^{M_{lk}}\frac{\lambda_{l0} P_{lki}}{\widetilde{\lambda}_{lkij}}\\
\nonumber&=\sum_{k=j}^{N_l}\sum_{i=1}^{M_{lk}}\lambda_{l0} P_{lki}\widetilde{t}_{lkij}\\
\nonumber&=\lambda_{l0}(1-\sum_{n=1}^{j-1} p_{ln})\overline{t}_{lj}\\
\nonumber&=\frac{\lambda_{l0}}{\lambda_{lj}}(1-\sum_{n=1}^{j-1} p_{ln})\\
&=w_{lj}.
\end{align}

Therefore, by Lemma \ref{thm:JacksonComposition}, we have
\begin{align}
\nonumber\pi(\mathbf{x})&=\sum_{\widetilde{\mathbf{x}}: x_{lj}=\sum_{k=j}^{N_{l}}\sum_{i=1}^{M_{lk}}\widetilde{x}_{lkij}, \forall j}\pi_{D}(\widetilde{\mathbf{x}})\\
&=\prod_{j=1}^{N_l}e^{-w_{lj}}\frac{w_{lj}^{x_{lj}}}{x_{lj}!},
\end{align}
which is restated as the following theorem:
\theorem
The single-route network has the same stationary user distribution as that of the corresponding single-route memoryless network: $\pi(\mathbf{x})=\pi_0(\mathbf{x})$.

\section{Stationary User Distribution in Multiple-Route Network}\label{multiple}
\begin{figure}[tbp]
\begin{center}
\scalebox{0.45}{ \input{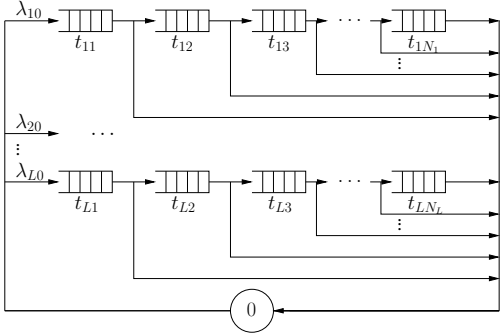} }
\caption{Multiple-route network.}
\label{queue2}
\end{center}
\end{figure}

In this section, we study the general case with multiple routes.  We first extend the results from the previous section to show $\pi(\mathbf{x})=\pi_{0}(\mathbf{x})$ in a multiple-route network.  We then derive the stationary user distribution $\pi_1(\mathbf{y})$ with respect to cells and show its insensitivity.

\subsection {Queueing Network Model for Multiple-Route Network}

Since the $L$ routes are independent, we model the multiple-route network as a paralleling of $L$ single-route networks, as shown in Fig.~\ref{queue2}.
Similar to Section \ref{single}, we consider a reference multiple-route memoryless network, which is a paralleling of $L$ corresponding single-route memoryless networks. Then, we construct the decoupled multiple-route network, which is a paralleling of $L$ corresponding single-route decoupled networks.

\subsection{Insensitivity of $\pi(\mathbf{x})$}

\theorem
The multiple-route network has the same stationary user distribution as that of the corresponding multiple-route memoryless network.

\begin{proof}
\setcounter{paragraph}{0}
Since the routes are independent, the stationary user distribution of the multiple-route network can be computed as the product of the stationary user distribution of single-route networks:
\begin{align}\label{fomula_pi}
\pi(\mathbf{x})=\prod_{l=1}^{L}\prod_{j=1}^{N_l} e^{-w_{lj}}w_{lj}^{x_{lj}}\frac{1}{x_{lj}!} .
\end{align}
Since the same holds for the multiple-route memoryless network, we have $\pi(\mathbf{x})=\pi_{0}(\mathbf{x})$.
\end{proof}

\subsection{Insensitivity of $\pi_1(\mathbf{y})$}

Let $\overline{\lambda}_n$ be the average total arrival rate to cell $n$, including both new and handoff arrivals.  Let $\overline{\mathfrak{t}}_n$ be the average channel holding time in cell $n$, considering all routes and stages.  Thus,
\begin{align}
\overline{\lambda}_n&=\sum_{l,j:c(l,j)=n}\sum_{k=j}^{N_l}\sum_{i=1}^{M_{lk}}\lambda_{l0}P_{lki},\\
\overline{\mathfrak{t}}_n&=\frac{\sum_{l,j:c(l,j)=n}\sum_{k=j}^{N_l}\sum_{i=1}^{M_{lk}}\lambda_{l0}P_{lki}\widetilde{t}_{lkij}}{\sum_{l,j:c(l,j)=n}\sum_{k=j}^{N_l}\sum_{i=1}^{M_{lk}}\lambda_{l0}P_{lki}}.
\end{align}
Then from (\ref{fomula_wkij}), we have
\begin{align}
\nonumber\overline{\lambda}_n\overline{\mathfrak{t}}_n&=\sum_{l,j:c(l,j)=n} \sum_{k=j}^{N_l}\sum_{i=1}^{M_{lk}} \lambda_{l0}P_{lki}\widetilde{t}_{lkij}\\
\nonumber&=\sum_{l,j:c(l,j)=n} \sum_{k=j}^{N_l}\sum_{i=1}^{M_{lk}} \widetilde{w}_{lkij}\\
&=\sum_{l,j:c(l,j)=n} w_{lj}.
\end{align}

The joint stationary  user distribution among all cells can be computed as a summation over those entries of $\pi_0(\mathbf{x})$ satisfying $y_n=\sum_{l,j:c(l,j)=n} x_{lj}$, $\forall n$. Then from Lemma \ref{thm:JacksonComposition}, we obtain
\begin{align} \label{fomula_sd}
\nonumber \pi_{1}(\mathbf{y})&=\sum_{\mathbf{x}: y_n=\sum_{l,j:c(l,j)=n}x_{lj}, \forall n}\quad \prod_{l=1}^{L}\prod_{j=1}^{N_l} e^{-w_{lj}}w_{lj}^{x_{lj}}\frac{1}{x_{lj}!}\\
\nonumber&=\prod_n e^{-\left(\sum_{l,j:c(l,j)=n}  w_{lj}\right)}\left(\sum_{l,j:c(l,j)=n}  w_{lj}\right)^{y_{n}}\frac{1}{y_{n}!}\\
&=\prod_{n} e^{-\left(\overline{\lambda}_n\overline{\mathfrak{t}}_n\right)}\left(\overline{\lambda}_n\overline{\mathfrak{t}}_n\right)^{y_{n}}\frac{1}{y_{n}!}.
\end{align}

We make the following observations from \eqref{fomula_sd}:
\begin{itemize}
\item The number of users in each cell is independent and Poisson. This is in accordance with Theorem 9.27 in \cite{Book-StochasticNetwork}.
\item The stationary user distribution depends only on the average arrival rates and average channel holding times in individual cells, having the exact same form of an $M/M/\infty$ open Jackson network.  It is insensitive with respect to the distribution of channel holding times, or the correlation among them.  Furthermore, it is insensitive with respect to movement patterns, since the exact routing in the network is irrelevant.
\item The marginal distribution within a single cell depends only on the average arrival rate and average channel holding time at that cell. This useful property facilitates efficient system management and planning in practice, helping to avoid the need for collecting a large amount of user location data.

\end{itemize}

\section {Experimental Study} \label{section_experiment}

In this section, our analysis is validated via experimenting with real-world traces. We first present the data source and experimental settings.  We then compare the experimental and analytical results.

\subsection{Requirements and the Dartmouth Traces}
There are serval publicly available traces online, including the Dartmouth traces \cite{Trace-Dartmouth-Data01}\cite{Trace-Dartmouth-Data02}\cite{Trace-Dartmouth-Data03}, the UCSD traces \cite{Trace-UCSD2}, the IBM-Watson traces \cite{Trace-IBMWatson}, and the Montreal traces \cite{Trace-Montreal}. To choose proper traces, we need to consider the following requirements.  \emph{First}, there should be a large amount of sample points to facilitate an estimation of the user distribution by relative frequency, which is to be compared with the distribution derived by the proposed analysis. Note that the support of the user distribution increases exponentially with the number of cells.  Most available traces do not have a large enough data set. \emph{Second}, the location of cells should be close enough so that there is enough handoff traffic among them to create strong dependency between channel holding times.  Data from already independently operated cells can be analyzed using exiting techniques and hence are not challenging enough to test our analytical model.
To the best of our knowledge, the Dartmouth traces are the most recent public traces satisfying both requirements.  They have been widely studied in the literature \cite{Trace-Dartmouth-Paper01}\cite{Trace-Dartmouth-Paper}\cite{Experiment2}\cite{Mobility-Core1}.
We use data from the academic area in the Dartmouth traces \cite{Trace-Dartmouth-Data03}, a comprehensive record of network activities in a large wireless
LAN (using 802.11b) in Dartmouth College.  The traces includes the data of 152 APs and more than 5000 users, during a 17-week period (Nov.~1, 2003 to Feb.~28, 2004). Most users are students walking on campus.
We focus on the  Simple Network Management Protocol (SNMP) logs of the traces, which are constructed every five minutes, when each AP polls all the users attached to it. Each polling message includes the information such as the name of AP, timestamp, the MAC and IP addresses of users attached to it, signal strength, and the number of packets transmitted.  By analyzing such data, we can derive the average arrival rate, average channel holding time, and the user distribution by relative frequency.

\subsection{Data Preprocessing} \label{sec:preproc}
\subsubsection{Data Extraction}
Since the behavior of users may change greatly between daytime and nighttime, or workdays and holidays, we focus on data accumulated from 9 am to 5 pm on Monday to Friday. We also discard the data accumulated during the periods of holiday breaks, including Thanksgiving (Nov. 26, 2003 to Nov. 30, 2003) and Christmas and New Year (Dec. 17, 2003 to Jan. 4, 2004).
In addition, for some APs, we observed periods when they are temporally power off.  If the total service time of an AP on a certain day is less than 1/3 of its average value, we discard the data for this day.


\subsubsection{Trace Gap Padding}
The session duration is defined as the period of time during which a user is continuously connected to the network. The user may move from one AP to another during a session. Occasionally, a user may disappear from the SNMP report and soon reappear. This may be caused by the user departing and then returning to the network, or due to the missing of an SNMP report. Following the solution proposed in \cite{Mobility-Core1}, we set a departure length threshold $T_d=10$ minutes. Only if a user disappears and reappears within $T_d$, it is regarded as staying in the network and the missing SNMP logs are padded.

\subsubsection{Multiple Association and Ping-Pong Effect}
We also observe that some users are simultaneously associated with multiple APs within a small time interval.  Some even ping-pong among multiple APs.  We use two methods to offset these effects. First, when multiple associations occur, we check the number of packets exchanged with the user. We deem the user is associated with the AP which has exchanged the largest number of packets with the user during its multiple association period.  In addition, if a user leaves one AP and then returns within $5$ minutes, it is regarded as having stayed in the AP.

\subsubsection{Open Users}
a fraction of the users may stay in the system during almost all working hours. These users are regarded as closed users.  Since our analytical model assumes an open network, the closed users are excluded in our experiment.  If a user stays for greater than or equal to $7.5$ hours during working hours on a valid day, it is regarded as a closed user. In our experiment, we observe that $9.91\%$ of all users are closed users.  An analytical model for accommodating closed users is provided in \cite{Mobility-Core1}, which can also be applied to our work.

\subsection{Trace Analysis}

\subsubsection{Poisson Arrivals} \label{section_arrival}

Analysis of the Dartmouth trace in \cite{Mobility-Core1} has shown that the overall \emph{new} session arrivals into the network are well modeled by a Poisson process.  In this work, we further test the arrival process of new sessions \emph{at each AP} against the Poisson assumption.  This is divided into two steps. In the first step, we run an \emph{independence test}, which indicates whether the numbers of arrivals in different time intervals are independent. Since it is not practical to account for all time intervals, we test the independence of arrivals in two consecutive hours at each AP. If the AP passes the test, we regard the arrivals at this AP to be sufficiently independent. Let $H_2$ denote the entropy in the number of new arrivals in two consecutive hours and $H_1$ denote the entropy in the number of arrivals in one hour. Let $\eta=\frac{2H_1-H_2}{H_2}$ be the normalized entropy gap. If $\eta<0.15$, we regard the AP as passing the independence test.  We observe that $144$ of the $152$ APs pass the independence test.

In the second step, we run a \emph{Poisson distribution test}, which indicates whether the number of arrivals is Poisson distributed in a fixed time interval. For each AP that passes the independence test, we count the number of new arrivals in each hour and calculate its real distribution.  Furthermore, by using the actual average arrival rate per hour, we can determine the corresponding theoretical  Poisson distribution. Then, we compute the Kullback-Leibler (KL) divergence $H_{0}$ between the real distribution and the theoretical distribution\footnote{Kullback-Leibler (KL) divergence is a standard approach to measure the difference between two probability distributions $X$ and $Y$. It can be regarded as a measure of the information lost (in bits) when $Y$ is used to represent $X$. When KL divergence is $0$, $Y$ is exactly the same with $X$. {If the KL divergence is small compared with the  entropy  of the distribution $X$, the distribution $Y$ is a close approximation of that of $X$.}}. Let $\theta=\frac{H_{0}}{H_1}$ be the normalized KL value. If $\theta<0.15$, we regard the AP as passing the Poisson distribution test.  We observe that $124$ of the $144$ APs pass the Poisson distribution test.

Those $124$ APs are referred to as \emph{valid APs}, as the new arrivals at these APs can be well approximated as Poisson.  The other $28$ APs are referred to as \emph{invalid APs}.  In our experiments, we study the effects of both including and excluding the non-Poisson new sessions.  We emphasize that the Poisson test is for new arrivals only.  Even for those APs that pass the Poisson test, the overall session arrival process includes both new arrivals and handoff arrivals and hence is non-Poisson.

From the SNMP logs, we observe that the invalid APs tend to have occasional bursty arrivals. Since they are within the academic area, we conjecture that they correspond to large classrooms, which experience periodic rushes a the beginning of lecture hours.  Even though such APs do not match our analytical model, their user distribution is likely easy to predict in practice.

\subsubsection{Number of Stages and Channel Holding Times}

\begin{table}[tbp]
\centering
\renewcommand{\arraystretch}{1.1}
\caption{Number of stages}
\label{table2}
\small
\begin{tabular}{|c|c|c|c|c|c| }
\hline
\bfseries Stages & $1$  &  $2$  & $3$ & $4$ & $\geq 5$   \\
\hline
\bfseries Observations    & $80448$  &  $15767$  & $7410$ &   $3553$ & $6107$    \\
\hline
\end{tabular}
\end{table}

We have collected the distributions of number of stages in each route, which is shown in Table \ref{table2}.  It can be seen that there is a large percentage of sessions staying for just one stage.  To rigorously test the analytical stationary user distribution, we will later present different cases where one-stage sessions are either included or excluded.

Note that if the channel holding times are independently exponentially distributed, our conclusions on the stationary user distribution trivially holds. Therefore, more challenging channel holding times (i.e., arbitrarily distributed and correlated) are necessary to test our analytical results. Fig.~\ref{servicetime} shows the real distributions of channel holding times in different stages. This figure illustrates that none of them are exponentially distributed.  Furthermore, we check the dependency of channel holding times in different stages. The entropies of the distributions of channel holding times at stages $1, 2, 3$ and $4$ are $4.0657$, $3.4172$, $3.3942$ and $2.9792$, respectively, in bits. The entropy of their joint distribution is $10.2998$ bits. Hence, the entropy gap is $4.0657+3.4172+3.3942+2.9792-10.2998=3.5565$ bits, much larger than $0$.  This shows that the channel holding times at different stages are dependent.

\subsubsection{AP Locations and Distance Constraint}
APs that are far away are likely to have little effect on each other, regardless of the mobility and session patterns.  Therefore, to rigorously test the joint distribution of multiple APs, we are more interested in selecting adjacent APs with spatial correlation.  We set a \emph{distance constraint}, under which APs are located pairwisely less than $500$ meters from each other. In the experiments, when we study the joint distribution over multiple APs, this distance constraint is enforced by default, unless otherwise stated. However, we will also present comparison results for
cases with and without it.

\begin{figure}[tbp]
\centering  \hspace{0pt}
\includegraphics[scale=0.47]{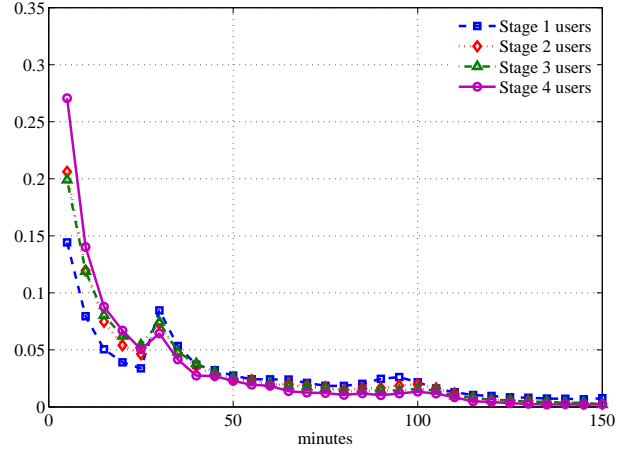}
\caption{The pdf of channel holding time in different stages.}
\label{servicetime}
\end{figure}

\subsection{Marginal User Distribution at a Single AP}

\begin{figure}[t]
\centering  \vspace*{0pt}
\includegraphics[scale=1.15]{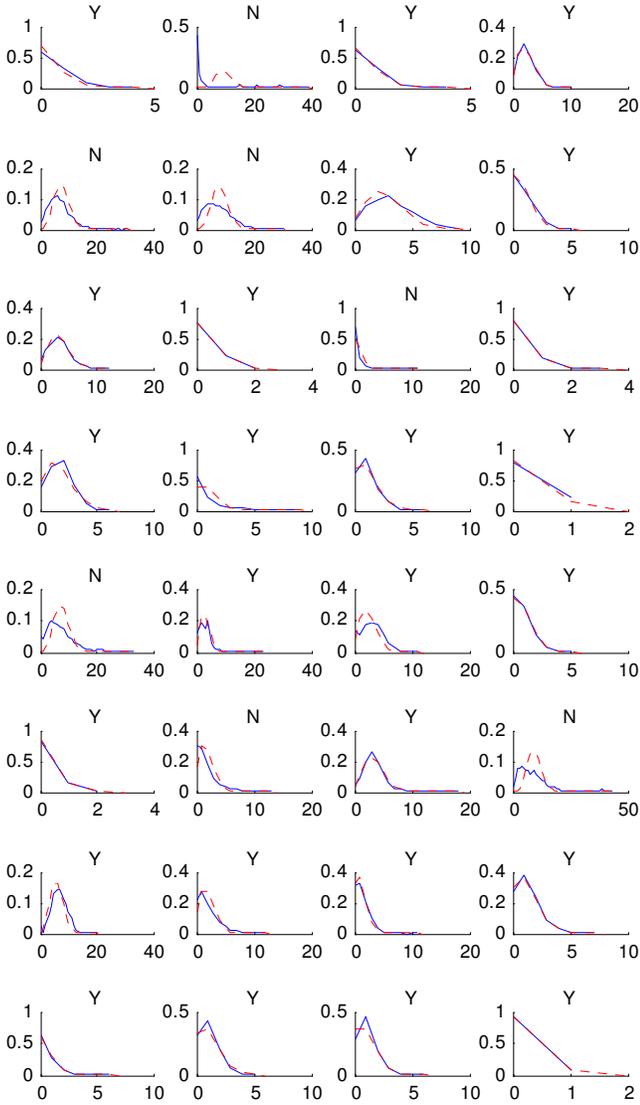}
\caption{Comparison of distributions for single APs. Real distributions  are in solid lines; analytical distributions are in dashed lines.}
\label{SD1}
\end{figure}

We first show the marginal user distribution at individual APs.   For this test, we applied all data after the pre-processing described in Section \ref{sec:preproc}, without further exclusions.
We show a sampling of the $152$ APs. In order to avoid selection bias, we choose APs according to their numeric identity. For each building (with at least one AP), we select the AP with the smallest identity number (i.e., \emph{AP1} if it exists; otherwise, we select \emph{AP2} if it exists; and so forth). There are $32$ buildings with at least one AP, and thus $32$ APs are selected accordingly.

Fig.~\ref{SD1} shows a comparison between the real distributions and the analytical distributions of these APs.  Each subplot is labeled with \emph{Y} or \emph{N}, where \emph{Y} indicates that the AP passes the two-step Poisson test and \emph{N} indicates the opposite.
The figure illustrates that the real distributions and the analytical distributions agree well with each other for those APs that pass the Poisson test.

\subsection{KL Divergence and Entropy Gap for Multiple APs}

 In this paper,   we use KL  divergence $H_{kl}$ to compare the real and analytical joint distributions of multiple APs.
We also test the independence of the numbers of users in different cells by computing
the entropy gap $H_{gap}$, between  the sum of the entropies of real marginal distributions and the entropy of the real joint distribution. The entropy of the real joint distribution $H_{real}$ is also presented for reference. {Note that if $H_{kl}$ is much smaller than $H_{real}$, the analytical distribution  is a close approximation of the real distribution; if $H_{gap}$ is much smaller than $H_{real}$, the numbers of users of single APs are approximately independent.}

Given $n$, the number of APs we aim to study, we randomly choose $n$ different APs. Then we compute $H_{kl}$, $H_{gap}$, and $H_{real}$ with respect to these APs.  By running this procedure $100$ times, we obtain the sample mean and sample standard deviation of $H_{kl}$, $H_{gap}$, and $H_{real}$.   In subsequent studies, we plot the sample mean versus $n$, along with bars showing one sample standard deviation, in Fig. \ref{result1}-\ref{result4}.  Note that the plot points are slightly shifted to avoid overlaps. Because the sample space of user distribution increases exponentially with the number of cells, and the real user distribution is counted through its relative frequency, we limit $n\leq5$ in the experiment in order to ensure enough data are counted for each sample point.

\begin{figure}[tbp]
\centering
\includegraphics[scale=0.42]{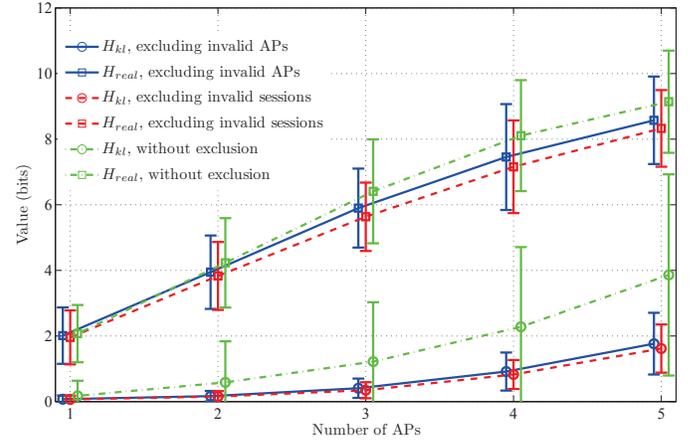}
\caption{$H_{kl}$ and $H_{real}$ under the influence of non-Poisson arrivals.}
\label{result1}
\end{figure}

\begin{figure}[tbp]
\centering
\includegraphics[scale=0.42]{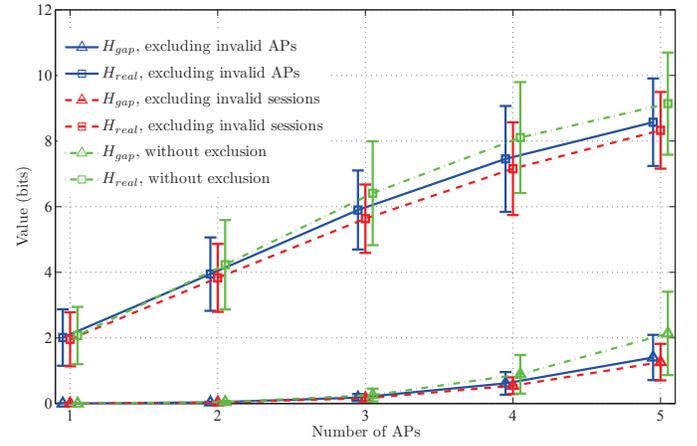}
\caption{$H_{gap}$ and $H_{real}$ under the influence of non-Poisson arrivals.}
\label{result2}
\end{figure}

\subsubsection{Influence of Non-Poisson Arrivals} \label{sec:EffectNon-Poisson}

Clearly, excluding non-Poisson arrivals could improve the accuracy of the analytical model.
We compare $H_{kl}$, $H_{gap}$, and $H_{real}$ under the conditions of either including or excluding non-Poisson arrivals.

A direct method to exclude non-Poisson session arrivals is to remove from the data set all sessions that are initiated at invalid APs.  However, this will reduce the number of handoff session arrivals even in valid APs, hence biasing the analysis.  An alternate approach is to simply remove the invalid APs from the data set, while allowing those non-Poisson sessions to be counted in the valid APs that they pass through. In this way, accurate average arrival rates at the valid APs are maintained.

Thus, we study the following three cases: 1) Excluding sessions initiating at invalid APs (i.e., invalid sessions); 2) Excluding invalid APs; and  3) Without exclusion.  Fig.~\ref{result1} illustrates $H_{kl}$ compared with $H_{real}$ for the three cases, and
Fig.~\ref{result2} illustrates $H_{gap}$ compared with $H_{real}$  for the three cases.
{We observe that both $H_{kl}$ and $H_{gap}$ are much smaller than $H_{real}$, when we either exclude invalid sessions or exclude invalid APs, illustrating that the real distributions are close to the analytical distributions,} and the numbers of users of single APs are approximately independent.
When we do not exclude invalid sessions or invalid APs, $H_{kl}$ and $H_{gap}$ become larger, showing that the analytical distribution is influenced by the non-Poisson arrivals. {However, $H_{kl}$ and $H_{gap}$ remain much smaller than $H_{real}$, illustrating that the analytical distribution is still valid to approximate the real distribution,}  even the arrivals are not strictly Poisson.

In addition, excluding invalid sessions only brings small decrements in $H_{kl}$ and $H_{gap}$ compared with excluding invalid APs. Note that when we exclude invalid sessions, both the one-stage and multiple-stage non-Poisson arrival sessions are excluded; when we exclude invalid APs, only the one-stage non-Poisson arrival sessions are excluded. This illustrates that multiple-stage non-Poisson arrival sessions have only weak influence on the modeling accuracy.


\begin{figure}[tbp]
\centering
\includegraphics[scale=0.42]{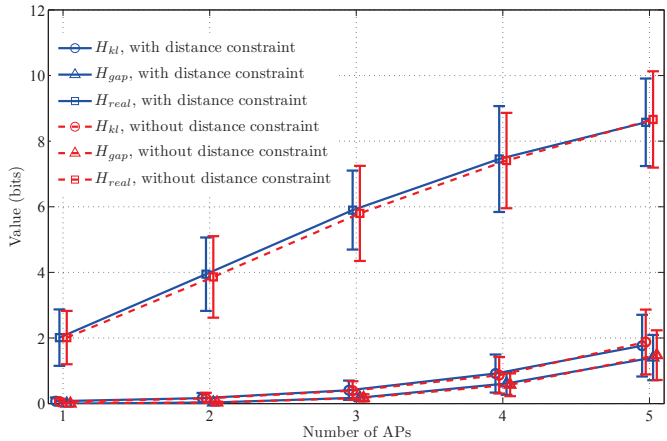}
\caption{$H_{kl}$, $H_{gap}$ and $H_{real}$ under the influence of distance restriction.}
\label{result3}
\end{figure}

\subsubsection{Influence of Distance Constraint}
Fig.~\ref{result3} shows $H_{kl}$, $H_{gap}$, and $H_{real}$ with and without the distance constraint. For both cases, we exclude the invalid APs.  We observe that $H_{kl}$, $H_{gap}$, and $H_{real}$ are nearly unchanged with or without the distance constraint, confirming our expectation that the distance constraint does not influence the accuracy of the analytical model, since the analytical model predicts that the numbers of users of adjacent APs are independent.

\begin{figure}[tbp]
\centering
\includegraphics[scale=0.42]{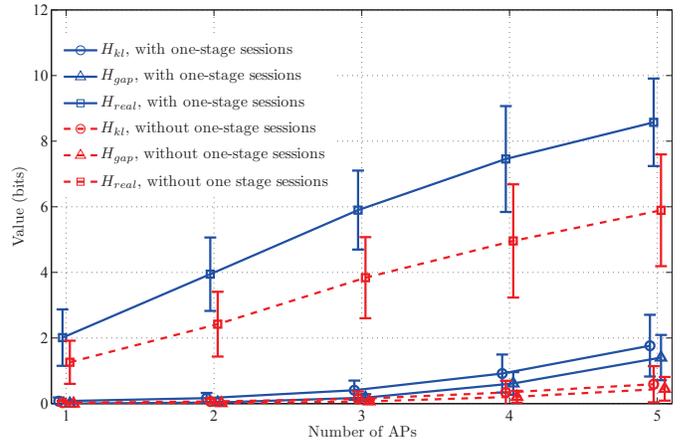}
\caption{$H_{kl}$, $H_{gap}$ and $H_{real}$ under the influence of one-stage sessions.}
\label{result4}
\end{figure}

\subsubsection{Influence of One-Stage Sessions}
Fig.~\ref{result4} shows $H_{kl}$, $H_{gap}$, and $H_{real}$ with and without the one-stage sessions. For both cases, we exclude the invalid APs.
We observe that when we exclude the one-stage sessions, $H_{kl}$ and $H_{gap}$ becomes smaller, suggesting that our model is even more accurate in this case.  This is an apparently counter-intuitive result, since the analytical distribution trivially holds for one-stage sessions.  An explanation for this is the following. Since one-stage sessions are more likely to be new sessions corresponding to attending lectures in a classroom, they are more likely to be non-Poisson.  Since not all non-Poisson arrivals can be excluded by removing the invalid APs, when we further exclude one-stage sessions, we obtain more accurate analytical results.

Note that one-stage sessions can be analyzed as a single-queue model \cite{Mobility-Core4}. Thus, in practice, one may separately analyze one-stage and multiple-stage sessions and combine the resultant user distributions.

\section{Conclusion and Discussion}\label{section_conclusion}
In this paper, we have studied the user distribution in multicell network by establishing a precise analytical model, considering arbitrary user movement and arbitrarily and dependently distributed channel holding times. We have derived the stationary distribution of the number of users in each cell, which is only related to the average arrival rate and the average channel holding time of each cell, and hance is insensitivity with respect to the general movement and session patterns. We have used the Dartmouth trace to validate our analysis, which shows that the analytical model is accurate when new session arrivals are Poisson and remains useful when non-Poisson session arrivals are also included in the data set.

The demonstration of modeling accuracy using an open $M/M/\infty$ Jackson network implies that the number of users in each cell is independently Poisson.  This spatial non-homogeneous Poisson model is commonly used in the geometric analysis of interference in wireless networks \cite{Book-StochasticGeometry1, Book-StochasticGeometry2}.  It can alternatively be inferred from associating the user trajectory as location-dependent marks to a space-time Poisson process representing the entry location and time of the users \cite{Book-StochasticNetwork}.  Our results additionally show that the mean values for the Poisson distributions in different cells are insensitive to the arbitrary and dependent channel holding times. This enables simple yet accurate computation of related performance measures in a complex mobile system.

\end{document}